\newtheorem{lemma}{Lemma}
\newtheorem*{lemma*}{Lemma}
\begin{document}

\title{Dominant Channel Estimation via MIPS for Large-Scale Antenna Systems with One-Bit ADCs}

%\author{In-soo Kim, Namyoon Lee, and Junil Choi}

\author{\IEEEauthorblockN{In-soo Kim, Namyoon Lee, and Junil Choi}
\IEEEauthorblockA{Department of Electrical Engineering\\
POSTECH\\
Email: \{insookim, nylee, junil\}@postech.ac.kr}}

\maketitle

\begin{abstract}
In large-scale antenna systems, using one-bit analog-to-digital converters (ADCs) has recently become important since they offer significant reductions in both power and cost. However, in contrast to high-resolution ADCs, the coarse quantization of one-bit ADCs results in an irreversible loss of information. In the context of channel estimation, studies have been developed extensively to combat the performance loss incurred by one-bit ADCs. Furthermore, in the field of array signal processing, direction-of-arrival (DOA) estimation combined with one-bit ADCs has gained growing interests recently to minimize the estimation error. In this paper, a channel estimator is proposed for one-bit ADCs where the channels are characterized by their angular geometries, e.g., uniform linear arrays (ULAs). The goal is to estimate the dominant channel among multiple paths. The proposed channel estimator first finds the DOA estimate using the maximum inner product search (MIPS). Then, the channel fading coefficient is estimated using the concavity of the log-likelihood function. The limit inherent in one-bit ADCs is also investigated, which results from the loss of magnitude information.
\end{abstract}

\section{Introduction}
In large-scale antenna systems, known as massive multiple-input multiple-output (MIMO), a significant performance gain is achieved by deploying a large number of antennas at the base station. However, the hardware cost and excessive power consumption due to the large number of antennas make high-resolution analog-to-digital converters (ADCs) less attractive, and using low-resolution, or even the extreme case of one-bit, ADCs has gained popularity \cite{larsson2014massive, bjornson2016massive, risi2014massive, wang2017quantization}. The drawback of one-bit ADCs is that the information loss resulting from the coarse quantization is severe, acting as a performance bottleneck.

In \cite{choi2016near, mo2014channel, li2017channel}, channel estimators for one-bit ADCs have been proposed. In \cite{choi2016near}, the near maximum likelihood (nML) channel estimator relies on the concavity of the log-likelihood function assuming that the channels have no predefined structure. The modified expectation-maximization (EM) algorithm proposed in \cite{mo2014channel} exploits the channel sparsity of millimeter wave systems. However, the computational complexities of the channel estimators introduced in \cite{choi2016near, mo2014channel} are expensive because both are based on the ML channel estimator. The Bussgang linear minimum mean squared error (BLMMSE) channel estimator introduced in \cite{li2017channel} uses the second-order channel statistics to formulate the one-bit LMMSE channel estimator based on the Bussgang decomposition, which is only applicable to Gaussian distributed channels.

In the field of array signal processing, direction-of-arrival (DOA) estimation for one-bit ADCs has been studied in the past \cite{bar2002doa}, although not widely, and begun to be in the spotlight recently \cite{liu2017one, gao2017gridless, stein2016doa}. In \cite{bar2002doa}, extensions of the conventional DOA estimators to one-bit ADCs, based on the arcsine law \cite{van1966spectrum, jacovitti1994estimation}, have been studied. The one-bit spatial smoothing MUSIC (SS MUSIC) proposed in \cite{liu2017one}, which also relies on the arcsine law, considers DOA estimation in sparse arrays. The drawback of the DOA estimators in \cite{bar2002doa, liu2017one} is that many independent observations are needed to reconstruct the covariance matrix. The work in \cite{gao2017gridless} proposed the gridless one-bit DOA estimator based on the support vector machine (SVM), which is cumbersome in practice since the coarse DOA estimate based on the SVM should be refined using the Taylor expansion to obtain the gridless DOA estimate.

In this paper, we propose a channel estimator for one-bit ADCs where the channels are characterized by their angular geometries, e.g., uniform linear arrays (ULAs). In particular, the channel is composed of multiple paths where each path is characterized by its channel fading coefficient and steering vector parameterized by the DOA. The goal is to estimate the dominant channel, known as the line-of-sight (LOS) channel. The proposed channel estimator finds the DOA estimate via the maximum inner product search (MIPS). Then, the channel fading coefficient is estimated by maximizing the log-likelihood function at the MIPS DOA estimate using convex optimization. In contrast to \cite{bar2002doa, liu2017one}, the MIPS DOA estimator finds the DOA estimate using the instantaneous received signal. The simulation results show that the proposed channel estimator performs close to the pseudo ML (pML) channel estimator, which maximizes the likelihood function assuming that the sum of the non-line-of-sight (NLOS) channels and noise is white and Gaussian. The MIPS-based channel estimator is also computationally efficient compared to the pML channel estimator. In addition, we investigate the performance limit of the pML channel estimator in one-bit ADCs.

\section{System Model}
We consider a single-input multiple-output (SIMO) system where the receiver employs an array of antennas characterized by its angular geometry. At each antenna, the received signal's real and imaginary parts are quantized by one-bit ADCs.

The single-antenna transmitter transmits a pilot sequence of length $N$ to the $M$-antenna receiver. Therefore, the received signal $\mathbf{Y}\in\mathbb{C}^{M\times N}$ is
\begin{equation}
\mathbf{Y}=\sqrt{\rho}\mathbf{h}\mathbf{x}^{T}+\mathbf{N}
\end{equation}
where the $m$-th row and $n$-th column correspond to the $m$-th antenna and $n$-th time slot, respectively. The channel $\mathbf{h}\in\mathbb{C}^{M}$ is formed by the combination of the LOS and NLOS channels $\mathbf{h}_{0}$ and $\mathbf{h}_{\ell}$ where $\ell\neq 0$, respectively, i.e.,
\begin{align}
       \mathbf{h}&=\sum_{\ell=0}^{L}c_{\ell}\mathbf{h}_{\ell},\\
\mathbf{h}_{\ell}&=g_{\ell}\mathbf{a}(\theta_{\ell}),
\end{align}
and
\begin{equation}
c_{\ell}=\begin{cases}\sqrt{K/(K+1)}&\text{if }\ell=0\\
                     \sqrt{1/L(K+1)}&\text{if }\ell\neq 0\end{cases},
\end{equation}
where $L$ is the number of NLOS paths, $K$ is the Rician $K$-factor, $g_{\ell}\in\mathbb{C}$ is the zero-mean and unit-variance $\ell$-th channel fading coefficient, $\theta_{\ell}$ is the $\ell$-th DOA, and $\mathbf{a}(\theta_{\ell})\in\mathbb{C}^{M}$ is the steering vector parameterized by $\theta_{\ell}$. In particular, we assume that $\mathbf{a}(\theta_{\ell})$ obeys the power constraint of $|a_{m}(\theta_{\ell})|^{2}=1$ where $a_{m}(\theta_{\ell})$ is the $m$-th element of $\mathbf{a}(\theta_{\ell})$. For example, if the receiver is modeled as a ULA where the inter-element spacing is $d$ while the wavelength is $\lambda$, the steering vector is
\begin{equation}\label{ULA}
\mathbf{a}(\theta_{\ell})=\begin{bmatrix}1, e^{-j\frac{2\pi d}{\lambda}\sin{\theta_{\ell}}}, \cdots, e^{-j\frac{2\pi d}{\lambda}(M-1)\sin{\theta_{\ell}}}\end{bmatrix}^{T}.
\end{equation}
The support of $g_{\ell}$ and $\theta_{\ell}$ are $\mathbb{C}$ and $\Theta\subset\mathbb{R}$, respectively. The pilot sequence $\mathbf{x}=\begin{bmatrix}x_{1}, \cdots, x_{N}\end{bmatrix}^{T}\in\mathbb{C}^{N}$ of length $N$ follows $|x_{n}|^{2}=1$ to simplify the peak power constraint. The elements of the additive white Gaussian noise (AWGN) $\mathbf{N}\in\mathbb{C}^{M\times N}$ are independent and identically distributed (i.i.d.) as $\mathcal{CN}(0,1)$. Furthermore, $g_{\ell}$, $\theta_{\ell}$, and $\mathbf{N}$ are assumed to be independent. The signal-to-noise ratio (SNR) is defined as $\rho$.

The vectorized received signal $\mathbf{y}=\mathrm{vec}(\mathbf{Y})$ is
\begin{equation}
\mathbf{y}=\sqrt{\rho}\mathbf{X}\mathbf{h}+\mathbf{n}
\end{equation}
where $\mathbf{X}=\mathbf{x}\otimes\mathbf{I}_{M}$ and $\mathbf{n}=\mathrm{vec}(\mathbf{N})\sim\mathcal{CN}(\mathbf{0}_{MN},\mathbf{I}_{MN})$. At each antenna, the real and imaginary parts of $\mathbf{y}$ are quantized by one-bit ADCs. The quantized received signal $\hat{\mathbf{y}}$ is
\begin{equation}
\hat{\mathbf{y}}=\mathrm{Q}(\mathbf{y})
\end{equation}
where $\mathrm{Q}(\cdot)$ is the element-wise one-bit quantization function. In this paper, zero thresholds are used, i.e.,
\begin{equation}
\mathrm{Q}(\mathbf{y})=\frac{1}{\sqrt{2}}\mathrm{sgn}(\mathbf{y})
\end{equation}
where $\mathrm{sgn}(\cdot)$ is the sign function, which is applied to the real and imaginary parts element-wise. Therefore, the elements of $\hat{\mathbf{y}}$ are constrained to the quadrature phase-shift keying (QPSK) constellation points, i.e.,
\begin{equation}
\hat{y}_{k}\in\frac{1}{\sqrt{2}}\{1+j, 1-j, -1+j, -1-j\}
\end{equation}
where $\hat{y}_{k}$ is the $k$-th element of $\hat{\mathbf{y}}$. The goal is to estimate $\mathbf{h}_{0}$ from $\hat{\mathbf{y}}$.

\textbf{Remark 1:} In this paper, the steering vector is assumed to be parameterized by the one-dimensional DOA to simplify the notations. However, the MIPS-based channel estimator can be applied to any angular geometry where the steering vector is parameterized by the multi-dimensional DOA as long as the elements of the steering vector obey the unit-power constraint, e.g., uniform planar arrays (UPAs) that are parameterized by the horizontal and vertical DOAs.

\section{pML Channel Estimation}
In this section, we analyze the pML channel estimator. The observations established in this section builds the framework of the MIPS-based channel estimator in Section \ref{section_4}. To develop the pML channel estimator, we write $\mathbf{y}$ as
\begin{align}
\mathbf{y}&=\sqrt{\rho}\mathbf{X}c_{0}\mathbf{h}_{0}+\sqrt{\rho}\mathbf{X}\sum_{\ell=1}^{L}c_{\ell}\mathbf{h}_{\ell}+\mathbf{n}\notag\\
          &=\sqrt{\rho}\mathbf{X}c_{0}\mathbf{h}_{0}+\bar{\mathbf{n}}.
\end{align}
In general, it is hard to estimate $\mathbf{h}_{0}$ because $\bar{\mathbf{n}}$ is neither white nor Gaussian. Hence, the pML channel estimator circumvents this problem by approximating $\bar{\mathbf{n}}$ by $\tilde{\mathbf{n}}$, which is distributed as $\mathcal{CN}(\mathbb{E}\{\bar{\mathbf{n}}\},\mathrm{diag}(\mathbb{E}\{\bar{\mathbf{n}}\bar{\mathbf{n}}^{H}\}))$, i.e,
\begin{equation}
\tilde{\mathbf{n}}\sim\mathcal{CN}(\mathbf{0}_{MN},\sigma^{2}\mathbf{I}_{MN})
\end{equation}
where
\begin{equation}
\sigma^{2}=\begin{cases}\rho/(K+1)+1&\text{if }L\neq 0\\
                                   1&\text{if }L=0\end{cases},
\end{equation}
and maximizes the likelihood function to estimate $\mathbf{h}_{0}$. In fact, the pML channel estimator reduces to the ML channel estimator when $L=0$ because $\bar{\mathbf{n}}$ is both white and Gaussian. To formulate the likelihood function assuming $\tilde{\mathbf{n}}$, as a preliminary step, we define $\tilde{\rho}=\rho/\sigma^{2}$. The $k$-th element of $\mathbf{X}c_{0}\mathbf{a}(\theta_{0})$ is denoted by $X_{k}(\theta_{0})$. In addition, define $\mathbf{f}_{R,k}(\theta_{0})$, $\mathbf{f}_{I,k}(\theta_{0})$, and $\mathbf{g}_{0}$ as
\begin{align}
\mathbf{f}_{R,k}(\theta_{0})&=\begin{bmatrix}\mathrm{Re}(X_{k}(\theta_{0})), -\mathrm{Im}(X_{k}(\theta_{0}))\end{bmatrix}^{T},\\
\mathbf{f}_{I,k}(\theta_{0})&=\begin{bmatrix}\mathrm{Im}(X_{k}(\theta_{0})), \mathrm{Re}(X_{k}(\theta_{0}))\end{bmatrix}^{T},\\
              \mathbf{g}_{0}&=\begin{bmatrix}\mathrm{Re}(g_{0}), \mathrm{Im}(g_{0})\end{bmatrix}^{T}.
\end{align}
Then, assuming $\tilde{\mathbf{n}}$, the $k$-th element of $\mathbf{y}$ is distributed as
\begin{equation}
\mathcal{CN}(\sqrt{\rho}X_{k}(\theta_{0})g_{0},\sigma^{2})
\end{equation}
conditioned on $g_{0}$ and $\theta_{0}$. Thus, because the elements of $\tilde{\mathbf{n}}$ are independent whose real and imaginary parts are independent, we obtain the log-likelihood function $L_{\tilde{\rho}}(\mathbf{g}_{0}', \theta_{0}')$ evaluated at $\mathbf{g}_{0}'\in\mathbb{R}^{2}$ and $\theta_{0}'\in\Theta$, which is
\begin{align}\label{log_likelihood_function}
L_{\tilde{\rho}}(\mathbf{g}_{0}', \theta_{0}')=&\sum_{k=1}^{MN}(\log\Phi(2\hat{y}_{R,k}\sqrt{\tilde{\rho}}\mathbf{f}_{R,k}^{T}(\theta_{0}')\mathbf{g}_{0}')\notag\\
                                               &+\log\Phi(2\hat{y}_{I,k}\sqrt{\tilde{\rho}}\mathbf{f}_{I,k}^{T}(\theta_{0}')\mathbf{g}_{0}'))
\end{align}
where $\hat{y}_{R,k}=\mathrm{Re}(\hat{y}_{k})$, $\hat{y}_{I,k}=\mathrm{Im}(\hat{y}_{k})$, and $\Phi(\cdot)$ represents the cumulative distribution function (CDF) of $\mathcal{N}(0,1)$. Then, the pML channel estimate $\hat{\mathbf{h}}_{0, \mathrm{pML}}$ of $\mathbf{h}_{0}$ is defined as
\begin{equation}
\hat{\mathbf{h}}_{0, \mathrm{pML}}=\hat{g}_{0, \mathrm{pML}}\mathbf{a}(\hat{\theta}_{0, \mathrm{pML}})
\end{equation}
where
\begin{equation}\label{optimization_problem}
(\hat{\mathbf{g}}_{0, \mathrm{pML}}, \hat{\theta}_{0, \mathrm{pML}})=\underset{\mathbf{g}_{0}'\in\mathbb{R}^{2}, \theta_{0}'\in\Theta}{\mathrm{argmax}}L_{\tilde{\rho}}(\mathbf{g}_{0}', \theta_{0}').
\end{equation}
$\hat{\mathbf{g}}_{0, \mathrm{pML}}$ contains the real and imaginary parts of $\hat{g}_{0, \mathrm{pML}}$.

In general, solving \eqref{optimization_problem} is computationally cumbersome due to the vast size of the search space and expensive objective function. To find $\hat{\mathbf{h}}_{0, \mathrm{pML}}$ more efficiently, we investigate the structure of $L_{\tilde{\rho}}(\mathbf{g}_{0}', \theta_{0}')$. First, note that $L_{\tilde{\rho}}(\mathbf{g}_{0}', \theta_{0}')$ is a concave function of $\mathbf{g}_{0}'$ because the argument of each $\Phi(\cdot)$ is an affine function of $\mathbf{g}_{0}'$, which preserves the log-concavity of $\Phi(\cdot)$ \cite{boyd2004convex}; a similar observation was established in \cite{choi2016near}. Thus, $\hat{\mathbf{h}}_{0, \mathrm{pML}}$ can be found more efficiently by interpreting \eqref{optimization_problem} as
\begin{equation}\label{two_optimization_problem}
\underset{\mathbf{g}_{0}'\in\mathbb{R}^{2}, \theta_{0}'\in\Theta}{\mathrm{max}}L_{\tilde{\rho}}(\mathbf{g}_{0}', \theta_{0}')=\underset{\theta_{0}'\in\Theta}{\mathrm{max}}\underset{\mathbf{g}_{0}'\in\mathbb{R}^{2}}{\mathrm{max}}L_{\tilde{\rho}}(\mathbf{g}_{0}', \theta_{0}')
\end{equation}
where the inner optimization problem of \eqref{two_optimization_problem} can be solved using convex optimization, e.g., the gradient descent method (GDM) using the backtracking line search \cite{boyd2004convex}.

Second, note that $L_{\tilde{\rho}}(\mathbf{g}_{0}', \theta_{0}')$ has no explicit 
structure with respect to $\theta_{0}'$. Thus, to solve the outer optimization problem of \eqref{two_optimization_problem}, $\Theta$ should be searched exhaustively; that is, the number of convex optimization performed is proportional to the size of $\Theta$. For example, assume that the inner optimization problem is solved using the GDM while the outer optimization problem is solved using the $B$-bit grid search, thereby partitioning $\Theta$ into $2^{B}$ uniform grid points. Then, the pML channel estimator performs the GDM $2^{B}$ times. In this paper, we assume that the pML channel estimator solves the inner and outer optimization problems of \eqref{two_optimization_problem} using the GDM, which uses the backtracking line search, and $B$-bit grid search, respectively. In Section \ref{section_5}, the computational complexity of the pML channel estimator is investigated by counting the number of real multiplications performed. The explanation of how the number of real multiplications performed by the pML channel estimator is counted is omitted because of the page limit.

Before we proceed to the MIPS-based channel estimator, we investigate the fundamental limit of one-bit ADCs in the high SNR regime through the lens of the pML channel estimator.
\begin{lemma}\label{lemma_1}
$\|\hat{\mathbf{h}}_{0, \mathrm{pML}}\|\to 0$ as $\tilde{\rho}\to\infty$ almost surely.
\end{lemma}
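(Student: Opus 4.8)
The plan is to exploit the fact that the steering vector has unit-modulus entries, so that $\|\mathbf{a}(\theta_{0}')\|=\sqrt{M}$ for every $\theta_{0}'\in\Theta$ and hence $\|\hat{\mathbf{h}}_{0,\mathrm{pML}}\|=\sqrt{M}\,|\hat{g}_{0,\mathrm{pML}}|=\sqrt{M}\,\|\hat{\mathbf{g}}_{0,\mathrm{pML}}\|$. It therefore suffices to show that $\|\hat{\mathbf{g}}_{0,\mathrm{pML}}\|\to 0$ as $\tilde{\rho}\to\infty$, which isolates the loss of magnitude information that the lemma is meant to capture.

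The key observation is that $\tilde{\rho}$ enters \eqref{log_likelihood_function} only through the product $\sqrt{\tilde{\rho}}\,\mathbf{g}_{0}'$. First I would substitute $\mathbf{v}=\sqrt{\tilde{\rho}}\,\mathbf{g}_{0}'$, so that for each fixed $\theta_{0}'$ the argument $2\hat{y}_{R,k}\sqrt{\tilde{\rho}}\mathbf{f}_{R,k}^{T}(\theta_{0}')\mathbf{g}_{0}'$ becomes $2\hat{y}_{R,k}\mathbf{f}_{R,k}^{T}(\theta_{0}')\mathbf{v}$, and likewise for the imaginary part. The reparameterized objective $\bar{L}(\mathbf{v},\theta_{0}')=\sum_{k}(\log\Phi(2\hat{y}_{R,k}\mathbf{f}_{R,k}^{T}(\theta_{0}')\mathbf{v})+\log\Phi(2\hat{y}_{I,k}\mathbf{f}_{I,k}^{T}(\theta_{0}')\mathbf{v}))$ no longer depends on $\tilde{\rho}$. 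Consequently, if $(\mathbf{v}^{\star},\theta^{\star})$ is a maximizer of $\bar{L}$, then the maximizer of \eqref{optimization_problem} is $(\hat{\mathbf{g}}_{0,\mathrm{pML}},\hat{\theta}_{0,\mathrm{pML}})=(\mathbf{v}^{\star}/\sqrt{\tilde{\rho}},\theta^{\star})$, where both $\mathbf{v}^{\star}$ and $\theta^{\star}$ are independent of $\tilde{\rho}$. Hence $\|\hat{\mathbf{g}}_{0,\mathrm{pML}}\|=\|\mathbf{v}^{\star}\|/\sqrt{\tilde{\rho}}$, and the entire claim reduces to showing that $\|\mathbf{v}^{\star}\|<\infty$ almost surely; the estimated DOA $\hat{\theta}_{0,\mathrm{pML}}=\theta^{\star}$ is frozen, which is precisely why the angle is retained while the magnitude collapses.

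To establish finiteness of $\mathbf{v}^{\star}$, I would invoke the concavity of $\bar{L}(\cdot,\theta_{0}')$ noted below \eqref{optimization_problem}: a concave function fails to attain its maximum at a finite point only if it is nondecreasing along some ray, i.e., only if the data are separable in the sense that there exist $\theta_{0}'\in\Theta$ and a direction $\mathbf{v}$ with $\hat{y}_{R,k}\mathbf{f}_{R,k}^{T}(\theta_{0}')\mathbf{v}\ge 0$ and $\hat{y}_{I,k}\mathbf{f}_{I,k}^{T}(\theta_{0}')\mathbf{v}\ge 0$ for all $k$. Using the complex representation $g_{0}'$ of $\mathbf{g}_{0}'$, for which $\mathbf{f}_{R,k}^{T}(\theta_{0}')\mathbf{g}_{0}'=\mathrm{Re}(X_{k}(\theta_{0}')g_{0}')$ and $\mathbf{f}_{I,k}^{T}(\theta_{0}')\mathbf{g}_{0}'=\mathrm{Im}(X_{k}(\theta_{0}')g_{0}')$, separability would require a single DOA $\theta_{0}'$ together with a single phase of $g_{0}'$ to reproduce all $2MN$ observed sign bits simultaneously. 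I would then argue that this cannot happen: the received vector is not aligned with any single steering direction once the NLOS contribution is present, and with $M$ large the two-parameter single-path family can realize only finitely many sign patterns, so $\bar{L}(\cdot,\theta_{0}')$ is coercive for every $\theta_{0}'$ and $\|\mathbf{v}^{\star}\|<\infty$.

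The main obstacle is exactly this last step—making the almost-sure non-separability rigorous while maximizing over the continuum $\Theta$. The substitution and the norm bookkeeping are routine, but the separability argument must control the realizable sign patterns uniformly over $\theta_{0}'$; I would do so by bounding the number of cells in the arrangement of the zero sets $\mathrm{Re}(X_{k}(\theta_{0}')\mathbf{v})=0$ and $\mathrm{Im}(X_{k}(\theta_{0}')\mathbf{v})=0$, which is polynomial in $MN$, and showing the observed pattern falls outside this finite collection. The delicate point is that separable patterns carry positive probability under noise alone, so the argument must lean on the multipath structure—which for large $M$ renders even the noiseless received pattern inconsistent with the single-path model—rather than on the noise, and this is where I expect the bulk of the work to lie.
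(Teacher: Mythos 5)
Your first two paragraphs are exactly the paper's proof in different clothing: the substitution $\mathbf{v}=\sqrt{\tilde{\rho}}\,\mathbf{g}_{0}'$ is the same scaling observation the paper makes by writing $L_{\tilde{\rho}_{2}}(\mathbf{g}_{0}',\theta_{0}')=L_{\tilde{\rho}_{1}}(\sqrt{k}\mathbf{g}_{0}',\theta_{0}')$ for $\tilde{\rho}_{2}=k\tilde{\rho}_{1}$, which yields $\hat{\mathbf{g}}_{\tilde{\rho}_{2}}(\theta_{0}')=\hat{\mathbf{g}}_{\tilde{\rho}_{1}}(\theta_{0}')/\sqrt{k}$ for every realization of $\hat{\mathbf{y}}$ and every $\theta_{0}'$, and hence $\|\hat{\mathbf{g}}_{0,\mathrm{pML}}\|\to 0$ pointwise, which trivially upgrades to the almost-sure statement. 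The paper stops there, implicitly assuming that the baseline maximizer $\hat{\mathbf{g}}_{\tilde{\rho}_{1}}(\theta_{0}')$ is attained at a finite point; you correctly noticed that this attainment is the only nontrivial hypothesis, and your third paragraph tries to prove it. Credit where due: you identified a lacuna the paper does not even acknowledge.

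However, the route you propose for closing it cannot work, and your own last sentence essentially concedes why. Almost-sure non-separability is false: $\hat{\mathbf{y}}$ is a discrete random variable supported on all $4^{MN}$ sign patterns (the Gaussian noise has full support), and the separable patterns---e.g., the quantization of $\mathbf{X}c_{0}g\,\mathbf{a}(\theta)$ for any single path---occur with strictly positive probability at every finite SNR, multipath or not. Worse, in the only regime where $\tilde{\rho}\to\infty$ is physically realized, namely $L=0$ (for $L\neq 0$ one has $\tilde{\rho}=\rho/(\rho/(K+1)+1)\to K+1$, so $\tilde{\rho}$ saturates), the probability of observing exactly the separable noiseless single-path pattern tends to \emph{one} as $\rho\to\infty$; leaning on the multipath structure therefore cannot rescue the argument, and counting cells of the hyperplane arrangement is beside the point, since the observed pattern lands on a separable cell with positive probability rather than avoiding a measure-zero set. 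The correct repair is more modest and is what the paper's per-realization framing implicitly does: the lemma is a statement about the deterministic map $\tilde{\rho}\mapsto\hat{\mathbf{g}}_{\tilde{\rho}}(\theta_{0}')$ with $\hat{\mathbf{y}}$ held fixed, valid for every realization on which the argmax in \eqref{optimization_problem} is attained (on separable realizations the pML estimate is simply undefined, or must be regularized, e.g., by the finite stopping criterion of the GDM). So: your reduction matches the paper, your extra step addresses a real but unacknowledged gap, and the specific non-separability argument you sketch is unsound as stated.
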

\begin{proof}
The goal of the proof is to show that $\|\hat{\mathbf{g}}_{0,\mathrm{pML}}\|\to 0$ as $\tilde{\rho}\to\infty$ almost surely because $\|\hat{\mathbf{h}}_{0, \mathrm{pML}}\|=\sqrt{M}\|\hat{\mathbf{g}}_{0, \mathrm{pML}}\|$. Therefore, we proceed by showing that $\|\hat{\mathbf{g}}_{\tilde{\rho}}(\theta_{0}')\|\to 0$ as $\tilde{\rho}\to\infty$ for any realization of $\hat{\mathbf{y}}$ evaluated at any $\theta_{0}'$ where $\hat{\mathbf{g}}_{\tilde{\rho}}(\theta_{0}')$ is defined as
\begin{equation}
\hat{\mathbf{g}}_{\tilde{\rho}}(\theta_{0}')=\underset{\mathbf{g}_{0}'\in\mathbb{R}^{2}}{\mathrm{argmax}}L_{\tilde{\rho}}(\mathbf{g}_{0}', \theta_{0}').
\end{equation}
First, consider $\tilde{\rho}_{1}>0$ and $\tilde{\rho}_{2}=k\tilde{\rho}_{1}$ where $k>0$. From \eqref{log_likelihood_function}, observe that $L_{\tilde{\rho}_{2}}(\mathbf{g}_{0}', \theta_{0}')=L_{\tilde{\rho}_{1}}(\tilde{\mathbf{g}}_{0}, \theta_{0}')$ where $\tilde{\mathbf{g}}_{0}=\sqrt{k}\mathbf{g}_{0}'$. Then, since
\begin{align}
\hat{\mathbf{g}}_{\tilde{\rho}_{2}}(\theta_{0}')&\overset{\hphantom{(a)}}{=}\underset{\mathbf{g}_{0}'\in\mathbb{R}^{2}}{\mathrm{argmax}}L_{\tilde{\rho}_{2}}(\mathbf{g}_{0}', \theta_{0}')\notag\\
                                                &\overset{(a)}{=}\frac{1}{\sqrt{k}}\underset{\tilde{\mathbf{g}}_{0}\in\mathbb{R}^{2}}{\mathrm{argmax}}L_{\tilde{\rho}_{1}}(\tilde{\mathbf{g}}_{0}, \theta_{0}')\notag\\
                                                &\overset{\hphantom{(a)}}{=}\frac{1}{\sqrt{k}}\hat{\mathbf{g}}_{\tilde{\rho}_{1}}(\theta_{0}')
\end{align}
where (a) results from substituting $\tilde{\mathbf{g}}_{0}$ for $\sqrt{k}\mathbf{g}_{0}'$, we observe that $\|\hat{\mathbf{g}}_{\tilde{\rho}_{2}}(\theta_{0}')\|\to 0$ as $k\to\infty$, which is equivalent to saying that $\|\hat{\mathbf{g}}_{\tilde{\rho}}(\theta_{0}')\|\to 0$ as $\tilde{\rho}\to\infty$, for any realization of $\hat{\mathbf{y}}$ evaluated at any $\theta_{0}'$. Hence, from $\hat{\mathbf{g}}_{0, \mathrm{pML}}=\hat{\mathbf{g}}_{\tilde{\rho}}(\hat{\theta}_{0, \mathrm{pML}})$, we note that $\|\hat{\mathbf{g}}_{0, \mathrm{pML}}\|\to 0$ as $\tilde{\rho}\to\infty$ for any realization of $\hat{\mathbf{y}}$, arriving at
\begin{align}
\mathrm{Pr}\left[\lim_{\tilde{\rho}\to\infty}\|\hat{\mathbf{h}}_{0, \mathrm{pML}}\|=0\right]&=\mathrm{Pr}\left[\lim_{\tilde{\rho}\to\infty}\|\hat{\mathbf{g}}_{0, \mathrm{pML}}\|=0\right]\notag\\
                                                                                            &=1,
\end{align}
which completes the proof.
\end{proof}
Lemma \ref{lemma_1} shows that $\|\hat{\mathbf{h}}_{0, \mathrm{pML}}\|$ becomes deterministic in the high SNR regime, implying that one-bit ADCs do not convey any channel quality information (CQI) in the absence of noise. In contrast to full-resolution ADCs where noise is generally not welcomed, the presence of noise may enhance the channel estimation performance in one-bit ADCs. The intuitive reason why the magnitude information of the received signal is lost in one-bit ADCs can be explained by noting that
\begin{align}\label{magnitude_information}
\mathrm{Q}(\sqrt{\rho}\mathbf{X}c_{0}\mathbf{h}_{0}+\tilde{\mathbf{n}})&\approx\mathrm{Q}(\sqrt{\rho}\mathbf{X}c_{0}\mathbf{h}_{0})\notag\\
                                                                       &=\mathrm{Q}(k\sqrt{\rho}\mathbf{X}c_{0}\mathbf{h}_{0})
\end{align}
in the high SNR regime where $k>0$. According to \eqref{magnitude_information}, the information embedded in $k$ is indistinguishable in the high SNR regime because one-bit ADCs do not convey any magnitude information. Hence, we expect that the performance of the pML channel estimator falls as the SNR enters the high SNR regime.

\section{MIPS-Based Channel Estimation}\label{section_4}
In this section, we derive the MIPS-based channel estimator. The problem of the pML channel estimator is that $L_{\tilde{\rho}}(\mathbf{g}_{0}', \theta_{0}')$ has no explicit structure with respect to $\theta_{0}'$. Thus, the number of convex optimization performed is proportional to the size of $\Theta$. The MIPS-based channel estimator attempts to break this computational bottleneck by performing two-stage channel estimation, which separates the search space into $\Theta$ and $\mathbb{R}^{2}$. In this section, similar to the pML channel estimator, the MIPS-based channel estimator considers $\tilde{\mathbf{n}}$ instead of $\bar{\mathbf{n}}$.

\subsection{DOA Estimator}
In this subsection, the first stage of the MIPS-based channel estimator is described. In the first stage, $\theta_{0}$ is estimated in the search space $\Theta$. Thus, we can interpret the problem as DOA estimation in one-bit ADCs. In \cite{bar2002doa}, DOA estimators for one-bit ADCs were proposed using the arcsine law, which relates the covariance matrices of the quantized and unquantized received signals using arcsine \cite{van1966spectrum, jacovitti1994estimation}. In this paper, motivated by the approach of \cite{bar2002doa}, we propose the MIPS DOA estimate of $\theta_{0}$.

To derive the MIPS DOA estimate of $\theta_{0}$, we express $\theta_{0}$ in terms of the conditional covariance matrix of $\hat{\mathbf{y}}$ given $\theta_{0}$. The conditional covariance matrix of $\mathbf{y}$ given $\theta_{0}$, namely $\mathbf{C}_{\mathbf{y}}(\theta_{0})$, is
\begin{align}\label{covariance_matrix}
\mathbf{C}_{\mathbf{y}}(\theta_{0})&=\mathbb{E}\{(\sqrt{\rho}\mathbf{X}c_{0}\mathbf{h}_{0}+\tilde{\mathbf{n}})(\sqrt{\rho}\mathbf{X}c_{0}\mathbf{h}_{0}+\tilde{\mathbf{n}})^{H}|\theta_{0}\}\notag\\
                                   &=\rho c_{0}^{2}\mathbf{X}\mathbf{a}(\theta_{0})\mathbf{a}(\theta_{0})^{H}\mathbf{X}^{H}+\sigma^{2}\mathbf{I}_{MN}
\end{align}
where $\bar{\mathbf{n}}$ was approximated by $\tilde{\mathbf{n}}$. The diagonal matrix formed by $\mathbf{C}_{\mathbf{y}}(\theta_{0})$ is denoted by $\mathbf{\Sigma}_{\mathbf{y}}(\theta_{0})$, which is
\begin{align}
\mathbf{\Sigma}_{\mathbf{y}}(\theta_{0})&=\mathrm{diag}(\mathbf{C}_{\mathbf{y}}(\theta_{0}))\notag\\
                                        &=(\rho c_{0}^{2}+\sigma^{2})\mathbf{I}_{MN}.
\end{align}
Then, from the arcsine law, $\mathbf{C}_{\hat{\mathbf{y}}}(\theta_{0})$, which is the conditional covariance matrix of $\hat{\mathbf{y}}$ given $\theta_{0}$, can be expressed by $\mathbf{C}_{\mathbf{y}}(\theta_{0})$ as
\begin{equation}
\mathbf{C}_{\hat{\mathbf{y}}}(\theta_{0})=\frac{2}{\pi}\arcsin\left(\mathbf{\Sigma}_{\mathbf{y}}^{-\frac{1}{2}}(\theta_{0})\mathbf{C}_{\mathbf{y}}(\theta_{0})\mathbf{\Sigma}_{\mathbf{y}}^{-\frac{1}{2}}(\theta_{0})\right)
\end{equation}
and vice versa, i.e.,
\begin{align}\label{sine_law}
\mathbf{C}_{\mathbf{y}}(\theta_{0})&=\mathbf{\Sigma}_{\mathbf{y}}^{\frac{1}{2}}(\theta_{0})\sin\left(\frac{\pi}{2}\mathbf{C}_{\hat{\mathbf{y}}}(\theta_{0})\right)\mathbf{\Sigma}_{\mathbf{y}}^{\frac{1}{2}}(\theta_{0})\notag\\
                                   &=(\rho c_{0}^{2}+\sigma^{2})\sin\left(\frac{\pi}{2}\mathbf{C}_{\hat{\mathbf{y}}}(\theta_{0})\right)
\end{align}
where $\arcsin(\cdot)$ and $\sin(\cdot)$ are the element-wise arcsine and sine functions applied to the real and imaginary parts, respectively. In addition, based on the conventional beamformer approach, which attempts to maximize the output power \cite{krim1996two}, $\theta_{0}$ can be written as
\begin{align}\label{conventional_beamformer}
                           &\underset{\theta_{0}'\in\Theta}{\mathrm{argmax}}(\mathbf{X}\mathbf{a}(\theta_{0}'))^{H}\mathbf{C}_{\mathbf{y}}(\theta_{0})\mathbf{X}\mathbf{a}(\theta_{0}')\notag\\
           \overset{(a)}{=}&\underset{\theta_{0}'\in\Theta}{\mathrm{argmax}}(\rho c_{0}^{2}|(\mathbf{X}\mathbf{a}(\theta_{0}'))^{H}\mathbf{X}\mathbf{a}(\theta_{0})|^{2}+\sigma^{2}\|\mathbf{X}\mathbf{a}(\theta_{0}')\|^{2})\notag\\
           \overset{(b)}{=}&\underset{\theta_{0}'\in\Theta}{\mathrm{argmax}}(\rho c_{0}^{2}N^{2}|\mathbf{a}(\theta_{0}')^{H}\mathbf{a}(\theta_{0})|^{2}+\sigma^{2} MN)\notag\\
\overset{\hphantom{(a)}}{=}&\underset{\theta_{0}'\in\Theta}{\mathrm{argmax}}|\mathbf{a}(\theta_{0}')^{H}\mathbf{a}(\theta_{0})|^{2}\notag\\
           \overset{(c)}{=}&\theta_{0}
\end{align}
where (a), (b), and (c) are the consequences of \eqref{covariance_matrix}, $\|\mathbf{x}\|^{2}=N$, and the unit-power constraint imposed on the steering vector's elements, respectively. At this point, $\theta_{0}$ can be expressed in terms of $\mathbf{C}_{\hat{\mathbf{y}}}(\theta_{0})$ by putting \eqref{sine_law} into \eqref{conventional_beamformer}, i.e.,
\begin{equation}\label{theta}
\theta_{0}=\underset{\theta_{0}'\in\Theta}{\mathrm{argmax}}(\mathbf{X}\mathbf{a}(\theta_{0}'))^{H}\sin\left(\frac{\pi}{2}\mathbf{C}_{\hat{\mathbf{y}}}(\theta_{0})\right)\mathbf{X}\mathbf{a}(\theta_{0}').
\end{equation}
\alglanguage{pseudocode}
\begin{algorithm}[tbp]
	\caption{MIPS-based channel estimator}
	\begin{algorithmic}[1]
		\State Find $\hat{\theta}_{0, \mathrm{MIPS}}=\underset{\theta_{0}'\in\Theta}{\mathrm{argmax}}|(\mathbf{X}\mathbf{a}(\theta_{0}'))^{H}\hat{\mathbf{y}}|$
		\State Find $\hat{\mathbf{g}}_{0, \mathrm{MIPS}}=\underset{\mathbf{g}_{0}'\in\mathbb{R}^{2}}{\mathrm{argmax}}L_{\tilde{\rho}}(\mathbf{g}_{0}', \hat{\theta}_{0, \mathrm{MIPS}})$
		\State $\hat{\mathbf{h}}_{0, \mathrm{MIPS}}=\hat{g}_{0, \mathrm{MIPS}}\mathbf{a}(\hat{\theta}_{0, \mathrm{MIPS}})$
	\end{algorithmic}
\end{algorithm}

However, since the receiver has no prior knowledge of $\mathbf{C}_{\hat{\mathbf{y}}}(\theta_{0})$, we use the sample covariance matrix $\hat{\mathbf{C}}_{\hat{\mathbf{y}}}$, which is
\begin{equation}
\hat{\mathbf{C}}_{\hat{\mathbf{y}}}=\hat{\mathbf{y}}\hat{\mathbf{y}}^{H},
\end{equation}
to develop the DOA estimator. By replacing $\mathbf{C}_{\hat{\mathbf{y}}}(\theta_{0})$ with $\hat{\mathbf{C}}_{\hat{\mathbf{y}}}$ in \eqref{theta}, we obtain the MIPS DOA estimate $\hat{\theta}_{0, \mathrm{MIPS}}$ of $\theta_{0}$, which is
\begin{align}\label{MIPS_DOA_estimate}
\hat{\theta}_{0,\mathrm{MIPS}}&\overset{\hphantom{(a)}}{=}\underset{\theta_{0}'\in\Theta}{\mathrm{argmax}}(\mathbf{X}\mathbf{a}(\theta_{0}'))^{H}\sin\left(\frac{\pi}{2}\hat{\mathbf{C}}_{\hat{\mathbf{y}}}\right)\mathbf{X}\mathbf{a}(\theta_{0}')\notag\\
                              &\overset{\hphantom{(a)}}{=}\underset{\theta_{0}'\in\Theta}{\mathrm{argmax}}(\mathbf{X}\mathbf{a}(\theta_{0}'))^{H}\sin\left(\frac{\pi}{2}\hat{\mathbf{y}}\hat{\mathbf{y}}^{H}\right)\mathbf{X}\mathbf{a}(\theta_{0}')\notag\\
                              &\overset{(a)}{=}\underset{\theta_{0}'\in\Theta}{\mathrm{argmax}}(\mathbf{X}\mathbf{a}(\theta_{0}'))^{H}\hat{\mathbf{y}}\hat{\mathbf{y}}^{H}\mathbf{X}\mathbf{a}(\theta_{0}')\notag\\
                              &\overset{\hphantom{(a)}}{=}\underset{\theta_{0}'\in\Theta}{\mathrm{argmax}}|(\mathbf{X}\mathbf{a}(\theta_{0}'))^{H}\hat{\mathbf{y}}|
\end{align}
where (a) is a direct consequence of the fact that the elements of $\hat{\mathbf{y}}\hat{\mathbf{y}}^{H}$ are constrained to $\{1, -1, j, -j\}$. The reason why the proposed DOA estimator is named the MIPS DOA estimator follows from \eqref{MIPS_DOA_estimate}; the receiver estimates $\theta_{0}$ by searching for the steering vector which maximizes the inner product. To solve \eqref{MIPS_DOA_estimate}, an exhaustive search is required.

\subsection{Channel Fading Coefficient Estimator}
In the second stage, $g_{0}$ is estimated. The MIPS-based channel fading coefficient estimate $\hat{g}_{0, \mathrm{MIPS}}$ of $g_{0}$ is defined as
\begin{equation}\label{MIPS_channel_fading_coefficient_estimate}
\hat{\mathbf{g}}_{0, \mathrm{MIPS}}=\underset{\mathbf{g}_{0}'\in\mathbb{R}^{2}}{\mathrm{argmax}}L_{\tilde{\rho}}(\mathbf{g}_{0}', \hat{\theta}_{0, \mathrm{MIPS}})
\end{equation}
where $\hat{\mathbf{g}}_{0, \mathrm{MIPS}}$ contains the real and imaginary parts of $\hat{g}_{0, \mathrm{MIPS}}$. From \eqref{MIPS_channel_fading_coefficient_estimate}, observe that $\hat{\mathbf{g}}_{0, \mathrm{MIPS}}$ is the maximizer of the inner optimization problem of \eqref{two_optimization_problem} defined at $\hat{\theta}_{0, \mathrm{MIPS}}$, which is a convex optimization problem. Therefore, we can find $\hat{\mathbf{g}}_{0, \mathrm{MIPS}}$ efficiently, which completes the estimation of $\mathbf{h}_{0}$. The MIPS-based channel estimate $\hat{\mathbf{h}}_{0, \mathrm{MIPS}}$ of $\mathbf{h}_{0}$ is defined as
\begin{equation}
\hat{\mathbf{h}}_{0, \mathrm{MIPS}}=\hat{g}_{0, \mathrm{MIPS}}\mathbf{a}(\hat{\theta}_{0, \mathrm{MIPS}}).
\end{equation}
In Algorithm 1, we give an outline of the MIPS-based channel estimator. The first stage requires an exhaustive search over $\Theta$. The second stage can be solved using convex optimization for a given $\hat{\theta}_{0, \mathrm{MIPS}}$. In this paper, we assume that the optimization problems in the first and second stages of the MIPS-based channel estimator are solved by the $B$-bit grid search and GDM where the backtracking line search is employed, respectively.
\begin{figure}[htbp]
	\centering
	\includegraphics[width=0.45\textwidth]{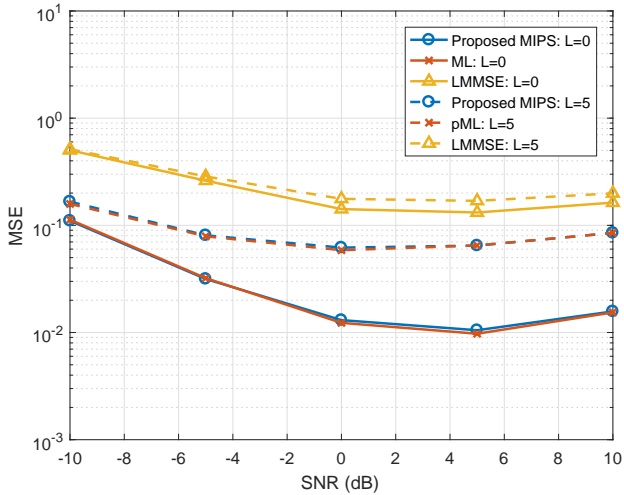}
	\caption{MSE vs. SNR for $M=24$, $N=15$ at different $L$.}
\end{figure}

In Section \ref{section_5}, it is shown that the computational complexity of the MIPS-based channel estimator is in the order of $1/2^{B}$ of the pML channel estimator's since the GDM is performed $2^{B}$ times by the pML channel estimator. In addition, the simulation results verify that the MIPS-based channel estimator performs close to the pML channel estimator. Thus, the MIPS-based channel estimator is computationally efficient while the performance loss is negligible compared to the pML channel estimator.

\textbf{Remark 2:} The performance of the MIPS-based channel estimator is expected to degrade in the high SNR regime because $\hat{\mathbf{g}}_{0, \mathrm{MIPS}}$ shrinks to a meaningless estimate $\mathbf{0}_{2}$ according to Lemma \ref{lemma_1}. In fact, the poor performance of one-bit ADCs in the high SNR regime is a well-known phenomenon, which is inevitable due to the loss of magnitude information.

\section{Simulation Results}\label{section_5}
In this section, Monte-Carlo simulations are performed to evaluate the performance of the MIPS-based channel estimator. It is assumed that the first and second stages of Algorithm 1 are solved by the $B$-bit grid search and GDM, respectively. In addition, we assume that the pML channel estimator finds $\hat{\mathbf{h}}_{0, \mathrm{pML}}$ by solving the inner and outer optimization problems of \eqref{two_optimization_problem} using the GDM and $B$-bit grid search, respectively. The parameter of the $B$-bit grid search is set to $B=8$. The stopping critetion $\eta$ and backtracking line search parameters of the GDM in \cite{boyd2004convex} are set to $\eta=0.01$ and $\alpha=0.1$, $\beta=0.5$, respectively. The zero-forcing (ZF) channel estimate in \cite{choi2016near} is used as the starting point of the GDM.

The Rician $K$-factor is set to 13.5 dB based on the measurements in \cite{muhi2010modelling}, the channel fading coefficients and DOAs are distributed as $g_{\ell}\sim\mathcal{CN}(0,1)$ and $\theta_{\ell}\sim\mathrm{Unif}(\Theta)$, respectively, where $\Theta=[-\pi/3, \pi/3]$, the receiver is modeled as a ULA in \eqref{ULA} where $d=\lambda/2$, and the pilot sequence is selected as the last column of the size $N$ discrete Fourier Transform (DFT) matrix.
\begin{table}[htbp]
	\centering
	\caption{The average number of real multiplications for $M=24$, $N=15$, and $L=5$ at different SNRs.}
	\renewcommand{\arraystretch}{1.5}
	\begin{tabular}{c|c|c}
		\hline
		SNR (dB)&MIPS$\cdot 10^{-6}$&pML$\cdot 10^{-6}$\\
		\hline\hline
		-10&1.8&514.7\\
		\hline
		-5&2.3&313.0\\
		\hline
		0&3.6&1873.5\\
		\hline
		5&5.1&684.3\\
		\hline
		10&13.8&1783.9\\
		\hline
	\end{tabular}
\end{table}

In Fig. 1, we compare the MSEs of the MIPS-based, pML, and LMMSE channel estimators when $M=24$, $N=15$ at different SNRs with $L=0$ and $L=5$. The MSE is defined as
\begin{equation}
\mathrm{MSE}=\frac{1}{M}\mathbb{E}\{\|\hat{\mathbf{h}}_{0}-\mathbf{h}_{0}\|^{2}\}
\end{equation}
where 1000 Monte-Carlo simulations are performed to compute the MSEs of the MIPS-based and pML channel estimators. The covariance matrices needed to compute the LMMSE are found by $10^{6}$ Monte-Carlo simulations. When $L=0$, the pML channel estimator is equivalent to the ML channel estimator because it is the maximizer of the likelihood function; that is, the pML channel estimator is optimal in the sense that the likelihood function is maximized. The interesting point is that the difference between the MSEs of the MIPS-based and ML channel estimators is negligible, which shows that the MIPS-based channel estimator achieves the performance of the ML channel estimator when $L=0$.

When $L=5$, the pML channel estimator becomes suboptimal because it is the maximizer of the approximated likelihood function, which is obtained by assuming that $\bar{\mathbf{n}}$ is both white and Gaussian. The MSE of the MIPS-based channel estimator is close to the pML channel estimator's, which shows that the MIPS-based channel estimator performs as good as the pML channel estimator regardless of $L$. In addition, both the MIPS-based and pML channel estimators perform worse in the high SNR regime, which can be explained by Lemma 1. The LMMSE channel estimator is shown as a comparison.

Table I shows the computational complexities of the MIPS-based and pML channel estimators when $M=24$, $N=15$, and $L=5$ at different SNRs. The measure of the computational complexity is the average number of real multiplications performed. The specific explanation of how they were counted is omitted because of the page limit. The ratio of the average number of real multiplications performed by the MIPS-based channel estimator to the pML channel estimator's is at most 0.7\%, which is in the order of $1/2^{B}$. Therefore, the MIPS-based channel estimator is computationally efficient while the performance gap is negligible compared to the pML channel estimator.
\begin{figure}[htbp]
	\centering
	\includegraphics[width=0.45\textwidth]{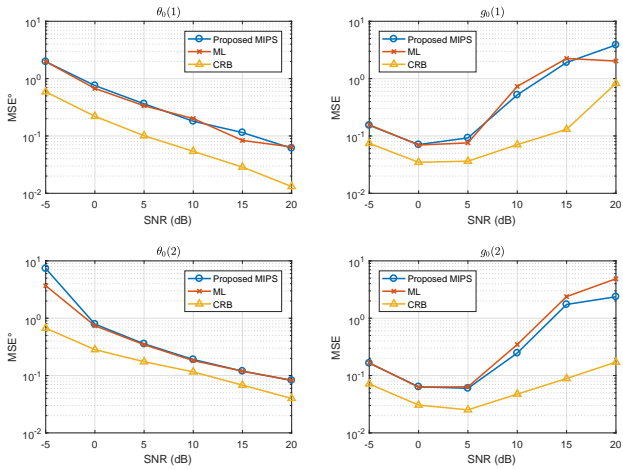}
	\caption{$\mathrm{MSE}(\theta_{0}(k))$ and $\mathrm{MSE}(g_{0}(k))$ vs. SNR for $M=8$, $N=10$, and $L=0$. Two realizations of $\theta_{0}$ and $g_{0}$ are given with the corresponding CRBs.}
\end{figure}
\begin{figure}[htbp]
	\centering
	\includegraphics[width=0.45\textwidth]{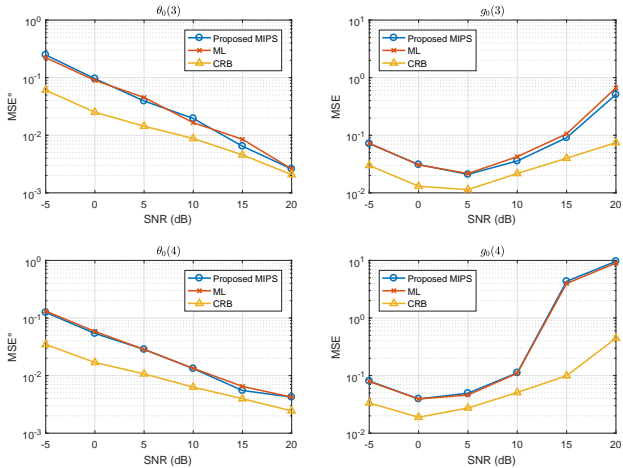}
	\caption{$\mathrm{MSE}(\theta_{0}(k))$ and $\mathrm{MSE}(g_{0}(k))$ vs. SNR for $M=16$, $N=12$, and $L=0$. Two realizations of $\theta_{0}$ and $g_{0}$ are given with the corresponding CRBs.}
\end{figure}

In Fig. 2 and Fig. 3, assuming that $\theta_{0}$ and $g_{0}$ are deterministic, the Cram\'er-Rao bound (CRB) of $\theta_{0}$ and $g_{0}$ is provided at different SNRs when $L=0$, which was derived in \cite{wang2017angular}. The $k$-th realizations of $\theta_{0}$ and $g_{0}$ are denoted by $\theta_{0}(k)$ and $g_{0}(k)$, respectively. The MSEs of the MIPS-based and ML channel estimators conditioned on $\theta_{0}(k)$ and $g_{0}(k)$, which are
\begin{align}
\mathrm{MSE}(\theta_{0}(k))&=\mathbb{E}\{(\hat{\theta}_{0}-\theta_{0})^{2}|\theta_{0}(k), g_{0}(k)\},\\
     \mathrm{MSE}(g_{0}(k))&=\mathbb{E}\{|\hat{g}_{0}-g_{0}|^{2}|\theta_{0}(k), g_{0}(k)\},
\end{align}
are plotted as well. From Fig. 2, observe that the gap between the MSEs of the MIPS-based and ML channel estimators are negligible as in Fig. 1. In addition, we point out that the gap between $\mathrm{MSE}(\theta_{0}(k))$ and the CRB is not increased, whereas the gap between $\mathrm{MSE}(g_{0}(k))$ and the CRB is increased as the SNR exceeds the medium SNR regime. This follows because one-bit ADCs convey only directional information, which is embedded in $\theta_{0}$; no magnitude information is provided, which is contained in $|g_{0}|$.

\section{Conclusion}
We proposed the MIPS-based channel estimator for large-scale antenna systems using one-bit ADCs where the array of antennas at the receiver is characterized by its angular geometry. The MIPS-based channel estimator finds the estimate of the dominant channel among multiple paths using two-stage channel estimation, which has low computational complexity. In the first stage, the DOA is estimated using the MIPS. After finding the DOA estimate, the second stage is performed, which finds the channel fading coefficient estimate using convex optimization. The simulation results showed that the MIPS-based channel estimator performs close to the pML channel estimator.

\section*{Acknowledgment}
This work was supported by the Institute for Information \& communications Technology Promotion (IITP) under grant funded by the MSIT of the Korea government (No.2018(2016-0-00123), Development of Integer-Forcing MIMO Transceivers for 5G \& Beyond Mobile Communication Systems).

\bibliographystyle{IEEEtran} 
\bibliography{refs_all}

\end{document}